\documentclass[journal,10pt,draftclsnofoot,onecolumn]{IEEEtran}

\usepackage[utf8]{inputenc}
\usepackage{amsmath}
\usepackage{amsthm}
\usepackage{graphicx}
\usepackage[nice]{nicefrac}

\usepackage{caption}
\usepackage{subcaption}

\usepackage{xcolor}

\usepackage{tikz}
\usetikzlibrary{arrows}
\usetikzlibrary{shapes}
\usetikzlibrary{backgrounds}

\usepackage{siunitx}
\usepackage{booktabs}

\usepackage{pgfplots}
\usepgfplotslibrary{colormaps}

\newtheorem{lemma}{Lemma} 
\newtheorem{theorem}{Theorem}

\newtheorem{proposition}[lemma]{Proposition} 
\theoremstyle{definition}

\theoremstyle{remark}
\newtheorem{example}{Example}

\newcommand{\Acal}{\mathcal{A}}
\newcommand{\Scal}{\mathcal{S}}

\newcommand{\ot}{\leftarrow}

\author{
	\IEEEauthorblockN{Johannes Rauh\IEEEauthorrefmark{1}, Pradeep Kr. Banerjee\IEEEauthorrefmark{1}, Eckehard Olbrich\IEEEauthorrefmark{1}, J{\"u}rgen Jost\IEEEauthorrefmark{1},\\ Nils Bertschinger\IEEEauthorrefmark{2}, and David Wolpert\IEEEauthorrefmark{3}\IEEEauthorrefmark{4}}\\
	\IEEEauthorblockA{\IEEEauthorrefmark{1}Max Planck Institute for Mathematics in the Sciences, Leipzig, Germany
		\\\{jrauh,pradeep,olbrich,jjost\}@mis.mpg.de}\\
	\IEEEauthorblockA{\IEEEauthorrefmark{2}Frankfurt Institute for Advanced Studies, Frankfurt, Germany
		\\bertschinger@fias.uni-frankfurt.de}\\
	\IEEEauthorblockA{\IEEEauthorrefmark{3}Santa Fe Institute, Santa Fe, NM, USA\\ 
	                  \IEEEauthorrefmark{4}MIT, Cambridge, MA, USA
		              \\dhw@santafe.edu}
}

\title{Coarse-graining and the Blackwell Order}

\begin{document}
\maketitle

\begin{abstract}
Suppose we have a pair of information channels,~$\kappa_{1},\kappa_{2}$, with a common input. The \emph{Blackwell order} is a partial order over channels that compares~$\kappa_{1}$ and~$\kappa_{2}$ by the maximal expected utility an agent can obtain when decisions are based on the channel outputs.  Equivalently, $\kappa_{1}$ is said to be Blackwell-inferior to~$\kappa_{2}$ if and only if~$\kappa_{1}$ can be constructed by \emph{garbling} the output of~$\kappa_{2}$. A related partial order stipulates that~$\kappa_{2}$ is \emph{more capable} than~$\kappa_{1}$ if the mutual information between the input and output is larger for~$\kappa_{2}$ than for~$\kappa_{1}$ for any distribution over inputs. A Blackwell-inferior channel is necessarily less capable. However, examples are known where~$\kappa_{1}$ is less capable than~$\kappa_{2}$ but not Blackwell-inferior. We show that this may even happen when~$\kappa_{1}$ is constructed by coarse-graining the inputs of~$\kappa_{2}$. Such a coarse-graining is a special kind of ``pre-garbling'' of the channel inputs. This example directly establishes that the expected value of the shared utility function for the coarse-grained channel is larger than it is for the non-coarse-grained channel. This contradicts the intuition that coarse-graining can only destroy information and lead to inferior channels. We also discuss our results in the context of information decompositions. 
\end{abstract}
Keywords: Channel preorders; Blackwell order; degradation order; garbling; more capable; coarse-graining


\section{Introduction}
\label{sec:intro}

Suppose we are given the choice of two channels that both provide information about the same random variable, and that we want to make a decision based on the channel outputs. Suppose that our utility function depends on the joint value of the input to the channel and our resultant decision based on the channel outputs. Suppose as well that we know the precise conditional distributions defining the channels, and the distribution over channel inputs.
Which channel should we choose? The answer to this question depends on the choice of our utility function as well as on the details of the channels and the input distribution. So for example, without specifying how we will use the channels, \emph{in general} we cannot just compare their information capacities to choose between them.

Nonetheless, for certain pairs of channels we can make our choice, even \emph{without} knowing the utility functions or the distribution over inputs. Let us represent the two channels by two (column) stochastic matrices $\kappa_{1}$ and  $\kappa_{2}$, respectively. 
Then if there exists another stochastic matrix~$\lambda$ such that~$\kappa_{1} = \lambda\cdot\kappa_{2}$, there is never any reason to strictly prefer~$\kappa_1$;  for if we choose~$\kappa_2$, we can always make our decision by chaining the output of~$\kappa_{2}$ through the channel~$\lambda$ and then using the same decision
function we would have used had we chosen~$\kappa_1$. 
This simple argument shows that whatever the three stochastic matrices are and whatever the decision rule we would use if we chose channel~$\kappa_1$, we can always get the same expected utility 
by instead choosing channel~$\kappa_2$ with an appropriate decision rule.
In this kind of situation, where~$\kappa_{1} = \lambda\cdot\kappa_{2}$, we say that~$\kappa_{1}$ is a \emph{garbling} (or \emph{degradation}) of~$\kappa_{2}$. 
It is much more difficult to prove that the converse also holds true:

\begin{theorem}[Blackwell's theorem~\cite{Blackwell1953}]
  \label{thm:Blackwell}
  Let~$\kappa_{1},\kappa_{2}$ be two stochastic matrices representing two channels with the same input alphabet.  Then the following two conditions are equivalent:
  \begin{enumerate}
  \item When the agent chooses $\kappa_{2}$ (and uses the decision rule 
  that is optimal for~$\kappa_{2}$), her expected utility is always at least as big as the expected utility when she chooses~$\kappa_{1}$ (and uses the optimal decision rule for~$\kappa_{1}$), independent of the utility function and the distribution of the input~$S$.
  \item $\kappa_{1}$ is a garbling of~$\kappa_{2}$.
  \end{enumerate}
\end{theorem}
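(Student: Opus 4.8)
The plan is to prove the two implications separately. The implication $(2)\Rightarrow(1)$ is exactly the elementary ``chaining'' argument already sketched in the Introduction: if $\kappa_1=\lambda\cdot\kappa_2$, then from $\kappa_2$ the agent can feed the output through the stochastic matrix $\lambda$ and then apply whichever decision rule was optimal for $\kappa_1$, thereby achieving with $\kappa_2$ at least the optimal-for-$\kappa_1$ expected utility, for \emph{every} utility function and \emph{every} input distribution. So the substance is the converse, $(1)\Rightarrow(2)$, which I would prove by contraposition: assuming $\kappa_1$ is \emph{not} a garbling of $\kappa_2$, I will construct a single finite decision problem (one input distribution, one utility function, one finite action set) on which the optimal expected utility using $\kappa_1$ strictly exceeds that using $\kappa_2$, contradicting~(1).

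The geometric input is that the set of garblings of $\kappa_2$,
\[
  C \;=\; \bigl\{\, \lambda\cdot\kappa_2 \;:\; \lambda \text{ column-stochastic of the appropriate size} \,\bigr\},
\]
is a compact convex polytope inside the finite-dimensional space of real matrices with the same shape as $\kappa_1$, being the image of the polytope of column-stochastic matrices under a linear map. If $\kappa_1\notin C$, the separating hyperplane theorem yields a real matrix $U=(U_{ys})$ (indexed like $\kappa_1$: rows by $\kappa_1$-outputs $y$, columns by inputs $s$) with
\[
  \langle U,\kappa_1\rangle \;>\; \langle U,\lambda\cdot\kappa_2\rangle \quad\text{for every column-stochastic } \lambda,
\]
where $\langle U,M\rangle=\sum_{y,s}U_{ys}M_{ys}$.

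Next I would read $U$ back as a decision problem. Let the input distribution $p$ be uniform on the input alphabet (any full-support prior works), let the action set be a copy of the output alphabet of $\kappa_1$, and set $u(s,a)=U_{as}/p(s)$. For any channel $\kappa$ with outputs $o$, the optimal expected utility is $V(\kappa)=\sum_o\max_a\sum_s p(s)\,\kappa(o\mid s)\,u(s,a)=\sum_o\max_a\sum_s \kappa(o\mid s)\,U_{as}$. For $\kappa_1$, taking in each summand the action equal to the observed output gives $V(\kappa_1)\ge\sum_{y,s}\kappa_1(y\mid s)\,U_{ys}=\langle U,\kappa_1\rangle$. For $\kappa_2$, the inner maximization over actions, performed output-by-output, is exactly the linear program defining $\max_\lambda\langle U,\lambda\cdot\kappa_2\rangle$ — choosing, for each output $z$, the action $y$ maximizing $\sum_s\kappa_2(z\mid s)\,U_{ys}$ is the same as choosing the best deterministic garbling of that output. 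Hence $V(\kappa_2)=\max_\lambda\langle U,\lambda\cdot\kappa_2\rangle<\langle U,\kappa_1\rangle\le V(\kappa_1)$, which is the contradiction with~(1).

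The crux — and the one step that is not bookkeeping — is the observation that optimizing the agent's decision rule against the outputs of $\kappa_2$ is \emph{the very same} optimization as choosing the garbling matrix $\lambda$; this is what converts a separating linear functional into an honest utility function. Minor care is needed elsewhere: finiteness of all alphabets is used so that $C$ is a polytope and all the maxima are attained; the hyperplane theorem must be applied in the strict-separation form for a point versus a compact convex set; and a full-support prior makes the division by $p(s)$ legitimate (equivalently, one absorbs $p(s)$ into the definition of $u$ from the start), while randomized decision rules may be ignored since the per-output optimization is already attained at a pure action. One could instead route the argument through the theory of comparison of statistical experiments or through a minimax theorem, but for finite alphabets the separating-hyperplane argument above is the most direct.
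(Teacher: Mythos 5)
Your argument is correct, but note that the paper does not prove this statement at all: Theorem~\ref{thm:Blackwell} is imported as a black box from Blackwell's 1953 paper (the text even remarks that the converse direction ``is much more difficult to prove''), and only the easy implication $(2)\Rightarrow(1)$ is sketched in the Introduction, exactly as you reproduce it. So there is no in-paper proof to compare against; what you have supplied is the standard modern finite-alphabet proof via strict separation of $\kappa_1$ from the garbling polytope $C=\{\lambda\cdot\kappa_2\}$, and it is sound. The one step that carries the whole weight --- the identification of the output-by-output maximization in the agent's optimal decision rule for $\kappa_2$ with the column-by-column linear program $\max_\lambda\langle U,\lambda\cdot\kappa_2\rangle$ (each column of $\lambda$ is a probability vector, so the maximum is attained at a vertex, i.e.\ a deterministic garbling, and the problem decouples over the outputs $z$) --- is exactly where the separating functional becomes a utility function, and you have it right, including the bound $V(\kappa_1)\ge\langle U,\kappa_1\rangle$ obtained by playing the action equal to the observed output. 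Your side remarks are also the right ones: compactness and convexity of $C$ as the linear image of the column-stochastic polytope justifies strict separation, the full-support prior makes the rescaling $u(s,a)=U_{as}/p(s)$ legitimate, and randomized decision rules gain nothing because each per-output maximum is attained at a pure action. The only caveat worth stating explicitly is that this proof is specific to finite alphabets (which is all the paper needs, since it works with stochastic matrices throughout); Blackwell's original setting and the general theory of comparison of experiments require the heavier machinery you allude to at the end.
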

Blackwell formulated his result in terms of a statistical decision maker who reacts to the outcome of a
\emph{statistical experiment}.  We prefer to speak of a decision problem instead of a statistical experiment.
See~\cite{Torgersen91:Comparison_of_statistical_experiments,LeCam96:Comparison_of_experiments} for an overview.

Blackwell's theorem motivates looking at the following partial order over channels $\kappa_{1},\kappa_{2}$ with a common input alphabet:
\begin{equation*}
  \kappa_{1}\preceq \kappa_{2} 
  \quad:\Longleftrightarrow\;
  \begin{cases}
    \text{one of the two statements}\\\text{in Blackwell's theorem holds true.}
  \end{cases}
\end{equation*}
We call this partial order the \emph{Blackwell order} (this partial order is called \emph{degradation order} by other authors \cite{Bergmans1973,CohenKempermanZbaganu98:Comparison_of_Stochastic_Matrices}). If $\kappa_{1}\preceq \kappa_{2}$, then $\kappa_{1}$ is said to be Blackwell-inferior to~$\kappa_{2}$.
Strictly speaking, the Blackwell order is only a preorder, since there are channels $\kappa_{1}\neq\kappa_{2}$ that satisfy~$\kappa_{1}\preceq\kappa_{2}\preceq\kappa_{1}$ (when $\kappa_{1}$ arises from $\kappa_{2}$ by permuting the output alphabet). However, for our purposes such channels can be considered as equivalent.  We write $\kappa_{1}\prec\kappa_{2}$ if $\kappa_{1}\preceq\kappa_{2}$ and $\kappa_{1}\not\succeq\kappa_{2}$.
By Blackwell's theorem this implies that $\kappa_{2}$ performs at least as good as $\kappa_{1}$ in any decision problem and that there exist decision problems in which $\kappa_{2}$ outperforms~$\kappa_{1}$.

For a given distribution of~$S$, we can also compare $\kappa_{1}$ and $\kappa_{2}$ by comparing the two mutual
informations $I(S;X_{1})$, $I(S;X_{2})$ between the common input $S$ and the channel outputs $X_{1}$ and $X_{2}$. The data processing inequality shows that $\kappa_{2}\succeq\kappa_{1}$ implies $I(S;X_{2})\ge I(S;X_{1})$.  However, the converse implication does not hold. 
The intuitive reason is that for the Blackwell order, 
not only the amount of information is important.  Rather, the question is how much of the information that $\kappa_{1}$
or $\kappa_{2}$ preserve is relevant for a given fixed decision problem (that is, a given fixed utility function).

Given two channels $\kappa_{1},\kappa_{2}$, suppose that $I(S;X_{2})\ge I(S;X_{1})$ for \emph{all} distributions of~$S$. In this case, we say that $\kappa_{2}$ is \emph{more capable} than~$\kappa_{1}$.  Does this imply that $\kappa_{1}\preceq\kappa_{2}$?  The answer is known to be negative in general~\cite{KoernerMarton75:Comparison_of_noisy_channels}.  
In Proposition~\ref{prop:channels} we introduce a new surprising example of this phenomenon with a particular structure.  In fact, in this example, $\kappa_{1}$ is a Markov approximation of~$\kappa_{2}$ by a deterministic function, in the following sense:
Consider another random variable $f(S)$ that arises from $S$ by applying a (deterministic) function~$f$.
Given two random variables $S$, $X$, denote by $X\ot S$ the channel defined by the conditional probabilities $P_{X|S}(x|s)$, 
and let $\kappa_{2}:=(X\ot S)$ and $\kappa_{1}:=(X\ot f(S))\cdot(f(S)\ot S)$.
Thus, $\kappa_1$ can be interpreted as first replacing~$S$ by~$f(S)$ and then sampling $X$ according to the conditional distribution~$P_{X|S}(x|f(s))$.
Which channel is superior? Using the data processing inequality, it is easy to see that $\kappa_{1}$ is less capable than~$\kappa_{2}$. However, as Proposition~\ref{prop:channels} shows, in general~$\kappa_{1}\not\preceq\kappa_{2}$.

We call $\kappa_{1}$ a Markov approximation, because the output of $\kappa_{1}$ is independent of the input~$S$ given~$f(S)$.  The channel $\kappa_{1}$ can also be obtained from $\kappa_{2}$ by ``pre-garbling'' (Lemma~\ref{lem:ex-as-pregarbling}); that is, 
there is another stochastic matrix~$\lambda^{f}$ that satisfies $\kappa_{1} = \kappa_{2}\cdot\lambda^{f}$.
It is known that pre-garbling may improve the performance of a channel (but not its capacity) as we recall in Section~\ref{sec:pregarbling}. What may be surprising is that this can happen for pre-garblings of the form~$\lambda^{f}$, which have the effect of coarse-graining according to~$f$.

The fact that the more capable preorder does not imply the Blackwell order shows that ``Shannon information,'' as captured by the mutual information, is not the same as ``Blackwell information,'' as needed for the Blackwell decision problems. Indeed, our example explicitly shows that even though coarse-graining always reduces Shannon information, it need not reduce Blackwell information.
Finally, let us mention that there are further ways of comparing channels (or stochastic matrices); see~\cite{CohenKempermanZbaganu98:Comparison_of_Stochastic_Matrices} for  an overview.

\smallskip

Proposition~\ref{prop:channels} builds upon another effect that we find paradoxical: Namely, there exist random variables $S,X_{1},X_{2}$ and there exists a function~$f:\Scal\to\Scal'$ from the support of~$S$ to a finite set~$S'$
such that the following holds:
\begin{enumerate}
\item $S$ and $X_{1}$ are independent given~$f(S)$.
\item $(X_{1}\ot f(S)) \;\preceq\; (X_{2}\ot f(S))$.
\item $(X_{1}\ot S) \;\not\preceq\; (X_{2}\ot S)$.
\end{enumerate}
Statement 1) says that everything $X_{1}$ knows about~$S$, it knows through $f(S)$.  Statement 2) says that $X_{2}$
knows more about $f(S)$ than~$X_{1}$.  Still, 3) says that we cannot conclude that $X_{2}$ knows more about~$S$
than~$X_{1}$.  The paradox illustrates that it is difficult to formalize what it means to ``know more.''

\bigskip

Understanding the Blackwell order is an important aspect of understanding information decompositions; that is, the
quest to find new information measures that separate different aspects of the mutual information
$I(S;X_{1},\dots,X_{k})$ of $k$ random variables $X_{1},\dots,X_{k}$ and a target variable~$S$ (see the other
contributions of this special issue and references therein).  In particular, \cite{BROJA13:Quantifying_unique_information}
argues that the Blackwell order provides a natural criterion when a variable~$X_{1}$ has unique information
about~$S$ with respect to~$X_{2}$.  We hope that the examples we present here are useful in developing intuition on how
information can be shared among random variables and how it behaves when applying a deterministic function, such as a
coarse-graining.  Further implications of our examples on information decompositions are discussed in~\cite{Everyone17:Extractable}.
In the converse direction, information decomposition measures (such as measures of unique information) can be used to
study the Blackwell order and deviations from the Blackwell order.  We illustrate this idea in Example~\ref{ex:main_hm}.

\bigskip

The remainder of this work is organized as follows: In Section~\ref{sec:pregarbling}, we recall how pre-garbling can be used to improve the performance of a channel.  We also show that the pre-garbled channel will always be less capable and that simultaneous pre-garbling of both channels preserves the Blackwell order. In Section~\ref{sec:pregarbling-by-condchan}, we state a few properties of the Blackwell order, and we explain why we find these properties counter-intuitive and paradoxical. In particular, we show that coarse-graining the input can improve the performance of a channel. Section~\ref{sec:example} contains a detailed discussion of an example that illustrates these properties.
In Section~\ref{sec:deficiency} we use the unqiue information measure from~\cite{BROJA13:Quantifying_unique_information}, which has properties similar to the Le Cam's deficiency, to illustrate deviations from the Blackwell relation.

\section{Pre-garbling}
\label{sec:pregarbling}

As discussed above (and as made formal in Blackwell's theorem (Theorem~\ref{thm:Blackwell})), garbling the output of a channel (``post-garbling'') never increases the quality of a channel.  
On the other hand, garbling the input of a channel (``pre-garbling'') may increase the performance of a channel, as the following example shows. 

\begin{example}
  \label{ex:pregarbling}
  Suppose that an agent can choose an action from a finite set~$\Acal$. She then receives a utility~$u(a,s)$ that depends both on the chosen action~$a\in\Acal$ and on the value $s$ of a random variable~$S$. Consider the channels
  \begin{equation*}
    \kappa_{1} =
    \begin{pmatrix}
      0.9 & 0 \\
      0.1 & 1
    \end{pmatrix}
    \text{ and }
    \kappa_{2} =
    \kappa_{1}\cdot
    \begin{pmatrix}
      0 & 1 \\
      1 & 0
    \end{pmatrix}
    = 
    \begin{pmatrix}
      0 & 0.9 \\
      1 & 0.1 
    \end{pmatrix},
\end{equation*}
and the utility function
\begin{center}
  \begin{tabular}{crrrr}
    $s$      & 0 & 0 & 1 & 1 \\
    $a$      & 0 & 1 & 0 & 1 \\
    \midrule
    $u(s,a)$ & 2 & 0 & 0 & 1 \\
  \end{tabular}
\end{center}
For uniform input the optimal decision rule for $\kappa_{1}$ is
\begin{equation*}
a(0) = 0, \;
a(1) = 1
\end{equation*}
and the opposite 
\begin{equation*}
a(0) = 1, \;
a(1) = 0
\end{equation*}
for $\kappa_{2}$. The expected utility with $\kappa_{1}$ is~$1.4$, while using~$\kappa_{2}$, it is slightly higher, $1.45$.

It is also not difficult to check that neither of the two channels is a garbling of the other (cf.~Prop.~3.22
in~\cite{CohenKempermanZbaganu98:Comparison_of_Stochastic_Matrices}).
\end{example}

The intuitive reason for the difference in the expected utilities is that the channel $\kappa_{2}$ transmits one of the states without noise and the other state with noise.  With a convenient pre-processing, it is possible to make sure that the relevant information \emph{for choosing an action and for optimizing expected utility} is transmitted with less noise.

Note the symmetry of the example: Each of the two channels arises from the other by a convenient pre-processing, since the pre-processing is invertible.  Hence, the two channels are not comparable by the Blackwell order. In contrast, two channels that only differ by an invertible garbling of the output are equivalent with respect to the Blackwell order.

The pre-garbling in Example~\ref{ex:pregarbling} is invertible, and so it is more aptly described as a pre-processing.
In general, though, pure pre-garbling and pure pre-processing are not easily distinguishable,
and it is easy to perturb 
Example~\ref{ex:pregarbling} by adding noise without changing the conclusion. In Section~\ref{sec:pregarbling-by-condchan}, we will present an example in which the pre-garbling consists of
coarse-graining. It is much more difficult to understand how coarse-graining can be used as sensible pre-processing.

\medskip

Even though pre-garbling can make a channel better (or, more precisely, more suited for a particular decision problem at hand), pre-garbling cannot invert the Blackwell order:
\begin{lemma}
  If $\kappa_{1} \prec \kappa_{2}\cdot\lambda$, then $\kappa_{1}\not\succeq\kappa_{2}$.
\end{lemma}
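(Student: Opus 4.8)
The plan is to argue by contradiction: I assume $\kappa_1\succeq\kappa_2$ and derive $\kappa_1\succeq\kappa_2\cdot\lambda$, which contradicts $\kappa_1\prec\kappa_2\cdot\lambda$ (unpacked: $\kappa_1\preceq\kappa_2\cdot\lambda$ and $\kappa_1\not\succeq\kappa_2\cdot\lambda$). First I write the garblings explicitly. Since $\kappa_2\preceq\kappa_1$, there is a stochastic matrix $\mu$ with $\kappa_2=\mu\kappa_1$; since $\kappa_1\preceq\kappa_2\lambda$, there is a stochastic matrix $\nu$ with $\kappa_1=\nu\kappa_2\lambda$. Combining these gives $\kappa_1=\tau\kappa_1\lambda$ with $\tau:=\nu\mu$ stochastic, and iterating this identity yields $\kappa_1=\tau^{\,n}\kappa_1\lambda^{\,n}$, hence also $\kappa_1\lambda=\tau^{\,n-1}\kappa_1\lambda^{\,n}$, for every $n\ge 1$.

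The statement then reduces to a single claim: $\kappa_1\lambda\preceq\kappa_1$. Granting it, $\kappa_2\lambda=\mu(\kappa_1\lambda)$ is a garbling of $\kappa_1\lambda$, hence of $\kappa_1$, so $\kappa_2\lambda\preceq\kappa_1$, i.e.\ $\kappa_1\succeq\kappa_2\lambda$, the desired contradiction. I would emphasise that this claim is the real content and cannot be obtained cheaply: the identity $\kappa_1=\tau\kappa_1\lambda$ does give $\kappa_1\preceq\kappa_1\lambda$ immediately, but one may \emph{not} conclude $\kappa_1\equiv\kappa_1\lambda$ from that alone, precisely because pre-garbling in general neither preserves nor reverses the Blackwell order --- that is the very phenomenon this paper is about. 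So the entire sequence $\kappa_1=\tau^{\,n}\kappa_1\lambda^{\,n}$ has to be used.

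To prove $\kappa_1\lambda\preceq\kappa_1$ I would pass to a limit in $n$. The closure of $\{\lambda^{\,n}:n\ge 1\}$ is a compact semigroup under matrix multiplication, so it contains an idempotent $\Lambda$; write $\Lambda=\lim_k\lambda^{\,n_k}$ with $n_k\to\infty$, and pass to a further subsequence so that also $\tau^{\,n_k}\to T$ and $\tau^{\,n_k-1}\to T'$ (the stochastic matrices form a compact set). Taking the limit in $\kappa_1=\tau^{\,n_k}\kappa_1\lambda^{\,n_k}$ gives $\kappa_1=T\kappa_1\Lambda$; right-multiplying by $\Lambda$ and using $\Lambda^2=\Lambda$ gives $\kappa_1\Lambda=T\kappa_1\Lambda=\kappa_1$, whence $\kappa_1\lambda^{\,n_k}=\kappa_1\cdot\lambda^{\,n_k}\to\kappa_1\Lambda=\kappa_1$. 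Finally, taking the limit in $\kappa_1\lambda=\tau^{\,n_k-1}\kappa_1\lambda^{\,n_k}$ gives $\kappa_1\lambda=T'\kappa_1$, and since $T'$ is stochastic this is exactly $\kappa_1\lambda\preceq\kappa_1$, which completes the contradiction and hence shows $\kappa_1\not\succeq\kappa_2$.

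The step I expect to be the main obstacle is this last one --- extracting an idempotent limit of the powers $\lambda^{\,n}$ and pushing the two identities through it; the rest is bookkeeping. (If $\lambda$ happens to be a permutation matrix the difficulty evaporates: then $\lambda^{\,m}$ is the identity for some $m$, so $\kappa_1\lambda=\tau^{\,m-1}\kappa_1$ at once.) Alternatively, if one prefers to avoid compactness, the reduced claim can be attacked from the mutual-information side: $\kappa_1\preceq\kappa_1\lambda$ gives $I(S;\text{output of }\kappa_1)\le I(S;\text{output of }\kappa_1\lambda)$ for every distribution of $S$ by the data-processing inequality, so if one has established that this particular pre-garbling does not increase capability, equality holds for all inputs, and the equality case of data processing along the Markov chain $S$--(output of $\kappa_1\lambda$)--(output of $\kappa_1$) forces $\kappa_1\lambda\preceq\kappa_1$.
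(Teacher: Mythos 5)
Your proof is correct, and it takes a genuinely different route from the paper's. The paper disposes of the lemma in one line via channel capacity: from $\kappa_{1}\prec\kappa_{2}\cdot\lambda$ it infers that the capacity of $\kappa_{1}$ is strictly smaller than that of $\kappa_{2}\cdot\lambda$, hence than that of $\kappa_{2}$, which rules out $\kappa_{1}\succeq\kappa_{2}$. Your argument is instead algebraic and topological: you combine the two garbling identities into $\kappa_{1}=\tau\kappa_{1}\lambda$, iterate to get $\kappa_{1}=\tau^{n}\kappa_{1}\lambda^{n}$ and $\kappa_{1}\lambda=\tau^{n-1}\kappa_{1}\lambda^{n}$, and use compactness of the stochastic matrices plus an idempotent limit point $\Lambda$ of $\{\lambda^{n}\}$ to conclude $\kappa_{1}\Lambda=\kappa_{1}$, hence $\kappa_{1}\lambda=T'\kappa_{1}\preceq\kappa_{1}$ and $\kappa_{2}\lambda=\mu\kappa_{1}\lambda\preceq\kappa_{1}$, contradicting strictness. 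Every step checks out; the only nontrivial input is the existence of an idempotent in the closure of the powers of a stochastic matrix, which is classical (Ellis--Namakura, or the fact that the set of limit points of $\{\lambda^{n}\}$ is a compact abelian group). What your route buys: it shows that under the contradiction hypothesis $\kappa_{1}$ and $\kappa_{2}\lambda$ would in fact be Blackwell-equivalent, and it sidesteps the delicate point on which the paper's one-liner implicitly relies, namely that strict Blackwell dominance forces a \emph{strict} capacity gap --- an implication that can fail when the capacity-achieving input distribution does not have full support, so your heavier argument is actually on firmer logical ground. What it costs is exactly that heaviness. One caveat: your closing ``data-processing equality'' alternative is weaker than the main argument, since extracting the factorization $\kappa_{1}\lambda=\sigma\kappa_{1}$ from equality in the data processing inequality requires the equality to hold at a full-support input distribution; the semigroup argument needs no such assumption and is the one to keep.
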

\begin{proof}
  Suppose that $\kappa_{1} \prec \kappa_{2}\cdot\lambda$.  Then the capacity of $\kappa_{1}$ is less than the capacity
  of~$\kappa_{2}\cdot\lambda$, which is bounded by the capacity of~$\kappa_{2}$.  Therefore, the capacity
  of~$\kappa_{1}$ is less than the capacity of~$\kappa_{2}$.
\end{proof}

Also, it follows directly from Blackwell's theorem that
\begin{equation*}
  \kappa_{1}\preceq\kappa_{2}
  \text{ implies }
  \kappa_{1}\cdot\lambda\preceq\kappa_{2}\cdot\lambda
\end{equation*}
for any channel~$\lambda$, where the input and output alphabets of $\lambda$ equal the input alphabet
of~$\kappa_{1},\kappa_{2}$.  Thus, pre-garbling preserves the Blackwell order when applied to both channels
simultaneously.

Finally, let us remark that certain kinds of simultaneous pre-garbling can also be ``hidden'' in the utility function: Namely, in Blackwell's theorem, it is not necessary to vary the distribution of~$S$, as long as the support of the (fixed) input distribution has full support~$\Scal$ (that is, every state of the input alphabet of $\kappa_{1}$ and $\kappa_{2}$ appears with positive probability).  In this setting, it suffices to look only at different utility functions.  When the input distribution is fixed, it is more convenient to think in terms of random variables instead of channels,
which slightly changes the interpretation of the decision problem.  Suppose we are given random variables
$S,X_{1},X_{2}$ and a utility function $u(a,s)$ depending on the value of $S$ and an action~$a\in\Acal$ as above.  If we cannot look at both $X_{1}$ and~$X_{2}$, should we rather look at $X_{1}$ or at $X_{2}$ to take our decision?

\begin{theorem}[Blackwell's theorem for random variables~\cite{BROJA13:Quantifying_unique_information}]
  The following two conditions are equivalent:
  \begin{enumerate}
  \item Under the optimal decision rule, when the agent chooses $X_{2}$, her expected utility is always at least as big as the expected utility when she chooses~$X_{1}$, independent of the utility function.
  \item $(X_{1}\ot S) \preceq (X_{2}\ot S)$.
  \end{enumerate}
\end{theorem}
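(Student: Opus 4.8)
The plan is to \emph{reduce} this statement to Blackwell's theorem (Theorem~\ref{thm:Blackwell}) by absorbing a change of the input distribution into the utility function. Write $\kappa_{i} := (X_{i}\ot S)$, viewed as a stochastic matrix whose columns are indexed by $\Scal := \operatorname{supp}(S)$, and let $p := P_{S}$, which by hypothesis has full support on~$\Scal$. The direction 2)~$\Rightarrow$~1) is immediate: if $\kappa_{1}\preceq\kappa_{2}$, then condition~1) of Theorem~\ref{thm:Blackwell} holds for every utility function and every input distribution, and in particular for the fixed distribution~$p$. So essentially all the work is in the converse.

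For 1)~$\Rightarrow$~2) I would verify condition~1) of Theorem~\ref{thm:Blackwell} directly: for an \emph{arbitrary} input distribution~$q$ on~$\Scal$ and an \emph{arbitrary} utility function~$v(a,s)$, I want to show that the optimal expected utility obtained from~$\kappa_{2}$ is at least the one obtained from~$\kappa_{1}$. The key step is to define the reweighted utility function
\[
  u(a,s) \;:=\; v(a,s)\,\frac{q(s)}{p(s)},
\]
which is well defined precisely because~$p(s)>0$ for all~$s\in\Scal$. Next I would observe that for \emph{any} channel~$\kappa=(X\ot S)$ and \emph{any} decision rule~$\delta$ mapping channel outputs to actions in~$\Acal$, the expected utility computed with input distribution~$p$ and utility~$u$ equals the expected utility computed with input distribution~$q$ and utility~$v$:
\[
  \sum_{s,x} p(s)\,\kappa(x\mid s)\,u(\delta(x),s)
  \;=\;
  \sum_{s,x} q(s)\,\kappa(x\mid s)\,v(\delta(x),s),
\]
since the factors $p(s)$ cancel against the $1/p(s)$ in~$u$. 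Because this identity holds for every~$\delta$ and the two sides are literally the \emph{same} function of~$\delta$, maximizing over~$\delta$ gives the same optimal value: the optimal expected utility for $(\kappa,p,u)$ equals the optimal expected utility for $(\kappa,q,v)$. Applying this with $\kappa=\kappa_{1}$ and with $\kappa=\kappa_{2}$, and then invoking hypothesis~1) for the utility function~$u$ under the fixed distribution~$p$, shows that the optimal value for $(\kappa_{2},q,v)$ is at least the one for $(\kappa_{1},q,v)$. Since $q$ and $v$ were arbitrary, condition~1) of Theorem~\ref{thm:Blackwell} holds, hence $\kappa_{1}\preceq\kappa_{2}$, i.e.\ $(X_{1}\ot S)\preceq(X_{2}\ot S)$.

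There is no serious analytic obstacle here; the single point that must be handled with care is the full-support hypothesis on~$P_{S}$, which is exactly what legitimizes the reweighting $q(s)/p(s)$ (and is the reason the statement is phrased for a random variable~$S$ rather than for an arbitrary sub-distribution on~$\Scal$). A secondary point worth spelling out is that ``the optimal decision rule'' causes no trouble under the reweighting: since the whole objective functional over decision rules is identical for $(p,u)$ and $(q,v)$, an optimal rule for one is an optimal rule for the other, so no separate argument about the structure of optimal rules is needed. The trivial cases (e.g.\ $q(s)=0$, which forces $u(\cdot,s)=0$) are harmless and only need a passing remark for full rigor.
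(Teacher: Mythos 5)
Your proof is correct, and it uses exactly the idea the paper itself alludes to in the paragraph preceding the theorem (the paper defers the actual proof to the cited reference): a change of input distribution can be absorbed into the utility function via the reweighting $u(a,s)=v(a,s)\,q(s)/p(s)$, which is legitimate precisely because the fixed distribution $P_S$ has full support on $\Scal$. Both directions are handled properly, so there is nothing to add.
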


\section{Pre-garbling by coarse-graining}
\label{sec:pregarbling-by-condchan}

In this section we present a few counter-intuitive properties of the Blackwell order.

\begin{proposition}
  \label{prop:rv}
  There exist random variables $S,X_{1},X_{2}$ and a function~$f:\Scal\to\Scal'$ from the support of~$S$ to
  a finite set~$S'$ such that the following holds:
  \begin{enumerate}
  \item $S$ and $X_{1}$ are independent given~$f(S)$.
  \item $(X_{1}\ot f(S)) \;\prec\; (X_{2}\ot f(S))$.
  \item $(X_{1}\ot S) \;\not\preceq\; (X_{2}\ot S)$.
  \end{enumerate}
\end{proposition}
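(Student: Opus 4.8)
The plan is to exhibit an explicit example with small alphabets and to verify the three properties by elementary one‑dimensional geometry. I take $S$ uniformly distributed on $\Scal=\{1,2,3\}$ and let $f$ be the coarse‑graining that merges the first two states, $f(1)=f(2)=a$ and $f(3)=b$, so $\Scal'=\{a,b\}$. Property~1 is not something to be proven but something to be built in: I simply define the channel $X_{1}\ot S$ so that $P_{X_{1}|S}(\cdot\,|\,s)$ depends on $s$ only through $f(s)$; then $X_{1}\ot S=(X_{1}\ot f(S))\cdot(f(S)\ot S)$ automatically, which is exactly Property~1. The channel $X_{2}\ot S$, by contrast, must genuinely distinguish the two merged states $1$ and $2$ — otherwise $X_{2}\ot S$ would also factor through $f(S)$, and Property~2 would immediately force Property~3 to fail.

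The key simplification is to make every output alphabet equal to $\{0,1\}$. A channel with binary input is then described by the pair $(\theta_{0},\theta_{1})$ of probabilities of emitting $1$, and a garbling (a column‑stochastic $2\times2$ matrix with columns $(1-a,a)^{\top}$, $(1-b,b)^{\top}$) acts on such a parameter by the affine map $\theta\mapsto(1-\theta)a+\theta b$ with $a,b\in[0,1]$. Thus each of the three properties becomes a statement about affine images of finitely many points of $[0,1]$. Concretely I would set $P_{X_{1}|S}(1\,|\,1)=P_{X_{1}|S}(1\,|\,2)=0.3$ and $P_{X_{1}|S}(1\,|\,3)=0.7$, and $P_{X_{2}|S}(1\,|\,1)=0.1$, $P_{X_{2}|S}(1\,|\,2)=0.3$, $P_{X_{2}|S}(1\,|\,3)=0.8$. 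Averaging over the uniform conditional distribution on the block $\{1,2\}$ gives that $X_{1}\ot f(S)$ is the binary symmetric channel $\mathrm{BSC}(0.3)$ and $X_{2}\ot f(S)$ is $\mathrm{BSC}(0.2)$ on input alphabet $\{a,b\}$, while $X_{1}\ot S$ has parameter vector $(0.3,0.3,0.7)$ and $X_{2}\ot S$ has parameter vector $(0.1,0.3,0.8)$.

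The three verifications then go as follows. For Property~2, $X_{1}\ot f(S)\preceq X_{2}\ot f(S)$ because $\mathrm{BSC}(0.3)=\mathrm{BSC}(\tfrac16)\cdot\mathrm{BSC}(0.2)$, so the garbling $\lambda=\mathrm{BSC}(\tfrac16)$ does the job; the relation is strict because $\mathrm{BSC}(0.2)$ has strictly larger capacity than $\mathrm{BSC}(0.3)$ and capacity is monotone along the Blackwell order (as recalled in Section~\ref{sec:pregarbling}), hence $X_{2}\ot f(S)\not\preceq X_{1}\ot f(S)$. For Property~3, suppose $X_{1}\ot S=\lambda\cdot(X_{2}\ot S)$ for some garbling $\lambda$; then its affine map $\phi$ must send $0.1\mapsto0.3$, $0.3\mapsto0.3$ and $0.8\mapsto0.7$. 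But $\phi(0.1)=\phi(0.3)$ forces $\phi$ to be constant equal to $0.3$, contradicting $\phi(0.8)=0.7$. Hence no such $\lambda$ exists and $X_{1}\ot S\not\preceq X_{2}\ot S$. (If one prefers, the same obstruction can be recorded as the failure of the point $(0.3,0.3,0.7)$ to lie in the parallelogram $\{((1-0.1)a+0.1b,\,(1-0.3)a+0.3b,\,(1-0.8)a+0.8b):a,b\in[0,1]\}$, which also yields a separating utility function via Blackwell's theorem for random variables.)

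The crux — and the step that takes some care — is reconciling Properties~2 and~3: $X_{2}$ must be strictly more informative about the \emph{coarse} variable $f(S)$, yet the \emph{fine} channels must stay Blackwell‑incomparable in the relevant direction. Both can hold only because $X_{2}$ "spends resolution" separating the merged states $1$ and $2$ asymmetrically, whereas $X_{1}$ treats them identically; the obstruction in Property~3 is exactly that a single garbling cannot collapse the distinct parameters $0.1$ and $0.3$ onto the common value $0.3$ while still moving $0.8$ to $0.7$. Choosing numbers for which the domination in Property~2 is \emph{strict} rather than a Blackwell equivalence, while Property~3 still holds, is the only delicate point; the binary output alphabets make it routine, since a Blackwell comparison of binary‑input channels reduces to asking whether a segment or parallelogram of affine images lies inside the unit cube.
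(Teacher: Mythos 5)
Your proof is correct, but it takes a genuinely different route from the paper's. The paper proves Proposition~\ref{prop:rv} via Example~\ref{ex:main} (the \textsc{And}-gate construction), in which the two coarse-grained channels $X_{1}\ot f(S)$ and $X_{2}\ot f(S)$ are \emph{identical}; strictness in Property~2 is then obtained only a posteriori, by composing $X_{1}\ot f(S)$ with a small binary symmetric channel and appealing to the fact that a sufficiently small perturbation cannot destroy $(X_{1}\ot S)\not\preceq(X_{2}\ot S)$ (which holds because the set of garblings of a fixed channel is compact, hence its complement is open). Property~3 is verified there by exhibiting an explicit utility function and comparing optimal expected utilities. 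You instead build an example in which Property~2 is strict from the outset ($\mathrm{BSC}(0.3)$ versus $\mathrm{BSC}(0.2)$), so no perturbation or openness argument is needed, and you verify Property~3 by a clean algebraic obstruction: a $2\times 2$ garbling acts on the output-$1$ probabilities as an affine map $\theta\mapsto a+\theta(b-a)$, and no such map can send $0.1$ and $0.3$ to the common value $0.3$ while sending $0.8$ to $0.7$. All the individual steps check out (the conditional independence in Property~1 follows because $P_{X_1|S}$ factors through $f$; the composition $\mathrm{BSC}(\tfrac16)\cdot\mathrm{BSC}(0.2)=\mathrm{BSC}(0.3)$ and the capacity argument for strictness are both correct). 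Your approach buys a more elementary, fully self-contained verification that avoids both the utility-function computation and the perturbation step; what it does not buy is the extra structure of the paper's example — there the two coarse channels coincide exactly and $X_1\perp S\mid f(S)$, which is what lets the same example also prove Proposition~\ref{prop:channels} via $(X\ot f(S))\cdot(f(S)\ot S)=(X_{1}\ot S)$. Your example, having distinct coarse channels, serves Proposition~\ref{prop:rv} but would not feed into that second result. One cosmetic remark: the proposition asserts the existence of jointly distributed $S,X_1,X_2$, while you only specify the two marginal channels; this is harmless (take $X_1$ and $X_2$ conditionally independent given $S$), but it is worth saying explicitly.
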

This result may at first seem paradoxical. After all, property~3) implies that there exists a decision problem involving $S$ for which it is better to use $X_1$ than $X_2$. 
Property~1) implies that any
information that~$X_{1}$ has about~$S$ is contained in $X_{1}$'s information about~$f(S)$.
One would therefore expect that, from the viewpoint of~$X_{1}$, any decision problem in which the task is to predict~$S$ and to react on~$S$ looks like a decision problem in which the task is to react to~$f(S)$.  
But property~2) implies that for such a decision problem, it may in fact be better to look at~$X_{2}$.

\begin{proof}[Proof of Proposition~\ref{prop:rv}]
  The proof is by Example~\ref{ex:main}, which will be given in Section~\ref{sec:example}.  This example satisfies
  \begin{enumerate}
  \item $S$ and $X_{1}$ are independent given~$f(S)$.
  \item $(X_{1}\ot f(S)) \;\preceq\; (X_{2}\ot f(S))$.
  \item $(X_{1}\ot S) \;\not\preceq\; (X_{2}\ot S)$.
  \end{enumerate}
  It only remains to show that it is possible to also achieve the strict relation $(X_{1}\ot f(S)) \prec (X_{2}\ot
  f(S))$ in the second statement.  This can easily be done by adding a small garbling to the channel $X_{1}\ot f(S)$
  (e.g.~by adding a binary symmetric channel with sufficiently small noise parameter~$\epsilon$).  This ensures
  $(X_{1}\ot f(S))\prec(X_{2}\ot f(S))$, and if the garbling is small enough, this does not destroy the property
  $(X_{1}\ot S) \;\not\preceq\; (X_{2}\ot S)$.
\end{proof}

The example from Proposition~\ref{prop:rv} also leads to the following paradoxical property:
\begin{proposition}
  \label{prop:channels}
  There exist random variables $S,X$ and there exists a function~$f:\Scal\to\Scal'$ from the support of~$S$ to
  a finite set~$S'$ such that the following holds:
  \begin{equation*}
    (X\ot f(S))\cdot(f(S)\ot S) \;\not\preceq\; X\ot S.
  \end{equation*}
\end{proposition}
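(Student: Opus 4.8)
The plan is to reduce this statement to the example behind Proposition~\ref{prop:rv} (namely Example~\ref{ex:main}) by simply taking $X := X_{2}$. Recall that Example~\ref{ex:main} supplies random variables $S, X_{1}, X_{2}$ and a map $f$ with: (i) $S$ and $X_{1}$ are independent given $f(S)$; (ii) $(X_{1}\ot f(S)) \preceq (X_{2}\ot f(S))$; and (iii) $(X_{1}\ot S) \not\preceq (X_{2}\ot S)$. I will argue that, for this same $S$ and $f$, the channel $\kappa_{1} := (X_{2}\ot f(S))\cdot(f(S)\ot S)$ cannot be Blackwell-below $\kappa_{2} := X_{2}\ot S$, which is exactly the asserted relation with $X = X_{2}$.

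First I would rephrase property~(i): conditional independence of $S$ and $X_{1}$ given $f(S)$ says $P_{X_{1}\mid S}(x\mid s) = P_{X_{1}\mid f(S)}(x\mid f(s))$, which is precisely the statement that the channel $X_{1}\ot S$ equals the composition $(X_{1}\ot f(S))\cdot(f(S)\ot S)$. So the Markov approximation built from $X_{1}$ is literally the channel $X_{1}\ot S$. Next I would compare the Markov approximations built from $X_{1}$ and from $X_{2}$: by property~(ii) there is a stochastic matrix $\gamma$ with $(X_{1}\ot f(S)) = \gamma\cdot(X_{2}\ot f(S))$, and right-multiplying both sides by $(f(S)\ot S)$ gives $(X_{1}\ot f(S))\cdot(f(S)\ot S) = \gamma\cdot\big[(X_{2}\ot f(S))\cdot(f(S)\ot S)\big]$. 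This is the same associativity observation used in Section~\ref{sec:pregarbling}: a garbling acting on the output survives pre-composition with a fixed channel. Hence $X_{1}\ot S = (X_{1}\ot f(S))\cdot(f(S)\ot S) \preceq (X_{2}\ot f(S))\cdot(f(S)\ot S) = \kappa_{1}$.

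Finally, suppose toward a contradiction that $\kappa_{1} \preceq \kappa_{2}$, that is, $(X_{2}\ot f(S))\cdot(f(S)\ot S) \preceq X_{2}\ot S$. Chaining this with the relation from the previous paragraph via transitivity of the Blackwell order yields $X_{1}\ot S \preceq X_{2}\ot S$, contradicting property~(iii). Therefore $(X_{2}\ot f(S))\cdot(f(S)\ot S) \not\preceq X_{2}\ot S$, which proves Proposition~\ref{prop:channels} with $X := X_{2}$ and the $S, f$ of Example~\ref{ex:main}.

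I do not expect any genuine obstacle in this step: once one notices that \emph{the same} witnessing data can serve both propositions, the argument is just associativity of channel composition together with transitivity of $\preceq$. The only point needing a moment's care is that all of (i)--(iii) and the conclusion are statements about channels (hence independent of the input distribution of $S$), whereas Example~\ref{ex:main} is stated for random variables; but since $(X\ot S)$ denotes a channel by definition, there is nothing to reconcile. In other words, all the real work is offloaded onto the construction of Example~\ref{ex:main}, which carries the non-trivial content of both Proposition~\ref{prop:rv} and Proposition~\ref{prop:channels}.
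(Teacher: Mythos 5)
Your proof is correct and takes essentially the same route as the paper: reduce to Example~\ref{ex:main} with $X = X_{2}$, use conditional independence to identify $(X_{1}\ot f(S))\cdot(f(S)\ot S)$ with $X_{1}\ot S$, and invoke $(X_{1}\ot S)\not\preceq(X_{2}\ot S)$. The only (harmless) difference is that the paper exploits the fact that in that example $X_{1}\ot f(S)$ and $X_{2}\ot f(S)$ are literally \emph{equal}, so the Markov approximation of $X_{2}$ \emph{is} $X_{1}\ot S$, whereas you route through the weaker relation $(X_{1}\ot f(S))\preceq(X_{2}\ot f(S))$, right-multiplication, and transitivity --- a slightly more general argument that would work for any witness of Proposition~\ref{prop:rv}.
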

Let us again give a heuristic argument why we find this property paradoxical.  Namely, the combined channel $(X\ot
f(S))\cdot(f(S)\ot S)$ can be seen as a Markov chain approximation of the direct channel~$X\ot S$ that corresponds to replacing the conditional distribution
\begin{equation*}
  P_{X|S}(x|s) =  \sum_{f(s)}  P_{X|Sf(S)}(x|s,f(s)) P_{f(S)|S}(f(s)|s).
\end{equation*}
by
\begin{equation*}
  \sum_{f(s)}  P_{X|f(S)}(x|f(s))P_{f(S)|S}(f(s)|s).
\end{equation*}
Proposition~\ref{prop:channels} together with Blackwell's theorem states that there exist situations where this
approximation is better than the correct channel.

\begin{proof}[Proof of Proposition~\ref{prop:channels}]
  Let $S,X_{1},X_{2}$ be as in Example~\ref{ex:main} in Section~\ref{sec:example} that also proves
  Proposition~\ref{prop:rv}, and let~$X=X_{2}$. In that example, the two channels $X_{1}\ot f(S)$ and $X_{2}\ot f(S)$ are equal. Moreover, $X_{1}$ and $S$ are independent given~$f(S)$.  Thus, $(X\ot f(S))\cdot(f(S)\ot S) = (X_{1}\ot S)$.  Therefore, the statement follows from $(X_{1}\ot S)\not\preceq(X_{2}\ot S)$.
\end{proof}

On the other hand, the channel $(X\ot f(S))\cdot(f(S)\ot S)$ is always less capable than $X\ot S$:
\begin{lemma}
  \label{lemma:capable}
  For any random variables $S$, $X$, and function~$f:\Scal\to\Scal$, the channel $(X\ot f(S))\cdot(f(S)\ot S)$ is less capable than~$X\ot S$.
\end{lemma}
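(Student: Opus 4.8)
The plan is to prove Lemma~\ref{lemma:capable} as a direct consequence of the data processing inequality, exactly along the lines sketched informally in the introduction. Write $\kappa_{1} := (X\ot f(S))\cdot(f(S)\ot S)$ and $\kappa_{2} := (X\ot S)$. Fix an arbitrary input distribution $P_{S}$ on $\Scal$, and let $X_{2}$ denote the output of feeding $S\sim P_{S}$ through $\kappa_{2}$, so that $X_{2}=X$ jointly with $S$. Let $X_{1}$ denote the output of feeding the same $S$ through $\kappa_{1}$: by construction $X_{1}$ is obtained from $S$ by first forming $f(S)$ and then sampling according to $P_{X|f(S)}$. We must show $I(S;X_{1})\le I(S;X_{2})$ for every choice of $P_{S}$.

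First I would observe that, for the joint distribution defining $\kappa_{1}$, the variables $S$, $f(S)$, $X_{1}$ form a Markov chain $S \to f(S) \to X_{1}$, because the conditional law of $X_{1}$ given $S$ depends on $S$ only through $f(S)$ (this is literally the defining property of the composed channel). Applying the data processing inequality to this chain gives $I(S;X_{1}) \le I(f(S);X_{1})$. Next, note that the conditional channel $X_{1}\ot f(S)$ used in $\kappa_{1}$ is by assumption the same conditional law as the conditional law of $X_{2}$ given $f(S)$ under $\kappa_{2}$ — that is, $P_{X_{1}|f(S)} = P_{X|f(S)} = P_{X_{2}|f(S)}$ — so in particular $I(f(S);X_{1}) = I(f(S);X_{2})$. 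Finally, under $\kappa_{2}$ the variables $S$, $f(S)$, $X_{2}$ form the Markov chain $X_{2} \to S \to f(S)$ (since $f(S)$ is a deterministic function of $S$), and data processing yields $I(f(S);X_{2}) \le I(S;X_{2})$. Chaining these three inequalities gives $I(S;X_{1}) \le I(f(S);X_{1}) = I(f(S);X_{2}) \le I(S;X_{2})$, which is the claim.

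There is essentially no obstacle here; the only point requiring a little care is to make sure all three mutual informations are computed under a \emph{single} consistent joint distribution for each channel, and to verify that the middle equality $I(f(S);X_{1}) = I(f(S);X_{2})$ really does hold — it does, because both sides are a function only of the marginal $P_{f(S)}$ (induced by $P_{S}$ and $f$, the same on both sides) and the common conditional kernel $P_{X|f(S)}$. One should also remark that the statement is about the \emph{more capable} preorder, which by definition quantifies over all input distributions $P_{S}$; since $P_{S}$ was arbitrary throughout, the argument establishes exactly that. It is worth noting explicitly (for contrast with Proposition~\ref{prop:channels}) that this lemma shows the failure of the Blackwell relation in that proposition cannot be detected by mutual information alone: $\kappa_{1}$ is always less capable than $\kappa_{2}$, yet need not be Blackwell-inferior.
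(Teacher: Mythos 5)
Your argument is essentially identical to the paper's: two applications of the data processing inequality (to the chains $S \to f(S) \to X_{1}$ and $X_{2} \to S \to f(S)$) sandwiching the observation that $(f(S),X_{1})$ and $(f(S),X_{2})$ have the same joint law, hence $I(f(S);X_{1})=I(f(S);X_{2})$. The only cosmetic difference is that the paper records the first step as an equality $I(S;X_{1})=I(f(S);X_{1})$ (using that $X_{1}$ is also independent of $f(S)$ given $S$), whereas you only use the inequality, which is all that is needed.
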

\begin{proof}
  For any distribution of~$S$, let $X'$ be the output of the channel $(X\ot f(S))\cdot(f(S)\ot S)$.  Then, $X'$ is
  independent of $S$ given~$f(S)$.  On the other hand, since $f$ is a deterministic function, $X'$ is independent of $f(S)$ given~$S$.  Together, this implies $I(S;X') = I(f(S);X')$.  Using the fact that the joint distributions of $(X,f(S))$ and $(X',f(S))$ are identical and applying the data processing inequality gives
  \begin{equation*}
    I(S;X') = I(f(S);X') = I(f(S);X) \le I(S;X). \qedhere
  \end{equation*}
\end{proof}

The setting of Proposition~\ref{prop:channels} can also be understood as a specific kind of pre-garbling.  Namely,
consider the channel $\lambda^{f}$ defined by
\begin{equation*}
  \lambda^{f}_{s',s} := P_{S|f(S)}(s'|f(s)).
\end{equation*}
The effect of this channel can be characterized as a randomization of the input: The precise value of~$S$ is forgotten, and only the value of $f(S)$ is
preserved.  Then a new value $s'$ is sampled for~$S$ according to the conditional distribution of~$S$ given $f(S)$.
\begin{lemma}
  \label{lem:ex-as-pregarbling}
  $(X\ot f(S))\cdot(f(S)\ot S) = (X\ot S)\cdot\lambda^{f}$.
\end{lemma}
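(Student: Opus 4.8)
The plan is to verify the claimed matrix identity entrywise, using the convention fixed in Section~\ref{sec:intro} that a channel $Y\ot Z$ is the column-stochastic matrix whose $(y,z)$ entry is $P_{Y|Z}(y|z)$ and that composition of channels is ordinary matrix multiplication. Throughout, all conditional distributions are well defined: every $s$ under consideration lies in the support of $S$, so $f(s)$ lies in the support of $f(S)$ and $P_{S|f(S)}(\cdot|f(s))$ makes sense.

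First I would rewrite $\lambda^{f}$ as a composition of elementary channels. Since $f$ is deterministic, the matrix $f(S)\ot S$ has $(s',s)$ entry $\mathbf{1}[s'=f(s)]$, and a one-line computation shows $\lambda^{f} = (S\ot f(S))\cdot(f(S)\ot S)$: the $(s',s)$ entry of the right-hand side is $\sum_{t}P_{S|f(S)}(s'|t)\,\mathbf{1}[t=f(s)] = P_{S|f(S)}(s'|f(s)) = \lambda^{f}_{s',s}$. Consequently $(X\ot S)\cdot\lambda^{f} = (X\ot S)\cdot(S\ot f(S))\cdot(f(S)\ot S)$, so it suffices to prove the single identity $(X\ot S)\cdot(S\ot f(S)) = X\ot f(S)$.

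That identity is just the law of total probability conditioned on the value of $f(S)$. The $(x,t)$ entry of the left-hand side is $\sum_{s'}P_{X|S}(x|s')\,P_{S|f(S)}(s'|t)$; since $f(S)$ is a deterministic function of $S$, the variables $X$ and $f(S)$ are conditionally independent given $S$, so $P_{X|S}(x|s') = P_{X|S,f(S)}(x|s',t)$ for every $s'$ with $f(s')=t$, while the terms with $f(s')\ne t$ vanish because then $P_{S|f(S)}(s'|t)=0$. Hence the sum equals $\sum_{s'}P_{X|S,f(S)}(x|s',t)\,P_{S|f(S)}(s'|t) = P_{X|f(S)}(x|t)$, which is exactly the $(x,t)$ entry of $X\ot f(S)$. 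Composing with $f(S)\ot S$ on the right then gives the lemma.

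I do not expect a real obstacle here; the computation is routine once the bookkeeping is set up. The only points that need care are respecting the column-stochastic convention so that the matrix products match the intended conditional distributions, and the well-definedness of the conditionals $P_{S|f(S)}(\cdot|f(s))$, which the support hypothesis on $f$ guarantees. Conceptually, the lemma says nothing more than that marginalizing the direct channel $X\ot S$ against the ``reversed'' coarse-graining channel $S\ot f(S)$ reconstructs the Markov-approximated channel of Proposition~\ref{prop:channels}.
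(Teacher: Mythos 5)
Your proof is correct and is essentially the paper's own computation, reorganized: the paper verifies the identity entrywise in one chain of equalities, inserting the sum over $t=f(s)$ and invoking the Markov chain $X-S-f(S)$ at exactly the step where you prove $(X\ot S)\cdot(S\ot f(S)) = X\ot f(S)$. Your factorization $\lambda^{f} = (S\ot f(S))\cdot(f(S)\ot S)$ is just a tidier way of packaging the same two ingredients (determinism of $f$ and the conditional independence of $X$ and $f(S)$ given $S$), so there is nothing to add.
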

\begin{proof}
  $\displaystyle\sum_{s_{1}}P_{X|S}(x|s_{1})P_{S|f(S)}(s_{1}|f(s)) = \sum_{s_{1}, t}P_{X|S}(x|s_{1})P_{S|f(S)}(s_{1}|t)P_{f(S)|S}(t|s)$
  \begin{equation*}
    = \sum_{t}P_{X|f(S)}(x|t)P_{f(S)|S}(t|s),
  \end{equation*}
  where we have used that $X-S-f(S)$ forms a Markov chain.
\end{proof}
While it is easy to understand that pre-garbling can be advantageous in general (since it can work as
preprocessing), we find surprising that this can also happen in the case where the pre-garbling is done in terms of a
function~$f$; that is, in terms of a channel $\lambda^{f}$ that does coarse-graining.

\section{Examples}
\label{sec:example}

\begin{example}
  \label{ex:main}
  Consider the joint distribution
  \begin{center}
    \begin{tabular}{cccc|c}
      $f(s)$ & $s$&$x_{1}$&$x_{2}$& $P_{f(S)SX_1X_2}$ \\
      \hline
      0 & 0 & 0 & 0 & 1/4 \\
      0 & 1 & 0 & 1 & 1/4 \\
      0 & 0 & 1 & 0 & 1/8 \\
      0 & 1 & 1 & 0 & 1/8 \\
      1 & 2 & 1 & 1 & 1/4 \\
    \end{tabular}
  \end{center}
  and the function $f:\{0,1,2\}\to\{0,1\}$ with $f(0)=f(1)=0$ and $f(2)=1$.  Then $X_{1}$ and $X_{2}$ are independent uniform   binary random variables, and $f(S) = \textsc{And}(X_{1},X_{2})$. By symmetry, the joint distributions of the pairs $(f(S), X_{1})$ and $(f(S), X_{2})$ are identical, and so the two channels $X_{1}\ot f(S)$ and $X_{2}\ot f(S)$ are identical. In particular $(X_{1}\ot f(S))\preceq(X_{2}\ot f(S))$.

  On the other hand, consider the utility function
  \begin{center}
    \begin{tabular}{cc|c}
      $s$&$a$&$u(s,a)$ \\
      \hline
      0 & 0 & 0 \\
      0 & 1 & 0 \\
      1 & 0 & 1 \\
      1 & 1 & 0 \\
      2 & 0 & 0 \\
      2 & 1 & 1 \\
    \end{tabular}
  \end{center}
  To compute the optimal decision rule, let us look at the conditional distributions:
  \begin{center}
    \begin{tabular}{cc|c}
      $s$&$x_{1}$& $P_{S|X_1}(s|x_{1})$\\
      \hline
      0 & 0 & 1/2 \\
      1 & 0 & 1/2 \\
      0 & 1 & 1/4 \\
      1 & 1 & 1/4 \\
      2 & 1 & 1/2 \\
    \end{tabular}
    $\quad$
    \begin{tabular}{cc|c}
      $s$&$x_{2}$& $P_{S|X_2}(s|x_{2})$\\
      \hline
      0 & 0 & 3/4 \\
      1 & 0 & 1/4 \\
      0 & 1 &  0  \\
      1 & 1 & 1/2 \\
      2 & 1 & 1/2 \\
    \end{tabular}
  \end{center}
  The optimal decision rule for $X_{1}$ is $a(0) = 0$, $a(1) = 1$, with expected utility
  \begin{equation*}
    u_{X_{1}} := 1/2\cdot 1/2 + 1/2\cdot 1/2 = 1/2.
  \end{equation*}
  The optimal decision rule for $X_{2}$ is $a(0) = 0$, $a(1) \in \{0,1\}$ (this is not unique in this case), with expected utility
  \begin{equation*}
    u_{X_{2}} := 1/2\cdot 1/4 + 1/2\cdot 1/2 = 3/8 < 1/2.
  \end{equation*}
\end{example}

  How can we understand this example?
  Some observations:
  \begin{itemize}
  \item It is easy to see that~$X_{2}$ has more irrelevant information than~$X_{1}$: namely, $X_{2}$ can determine relatively precisely when~$S=0$. However, since $S=0$ gives no utility independent of the action, this information is not relevant.
  It is more difficult to understand why~$X_{2}$ has less relevant information than~$X_{1}$. Surprisingly, $X_{1}$ can determine more precisely when~$S=1$: If~$S=1$, then~$X_{1}$ ``detects this'' (in the sense that~$X_{1}$ chooses action~$0$) with probability~$2/3$. For $X_{2}$, the same probability is only~$1/3$.
  \item The conditional entropies of~$S$ given~$X_{2}$ are smaller than the conditional entropies of~$S$ given~$X_{1}$:
    \begin{align*}
      H(S|X_{1}=0) &= \log(2), &H(S|X_{1}=1) &= \textstyle\frac32\log(2), \\
      H(S|X_{2}=0) &= 2\log(2) - \textstyle\frac32\log(3) \approx
      0.4150375\log(2), &H(S|X_{2}=1) &= \log(2).
    \end{align*}
  \item One can see in which sense~$f(S)$ captures the relevant information for~$X_{1}$, and indeed for the whole
    decision problem: knowing~$f(S)$ is completely sufficient in order to receive the maximal utility for each state of~$S$. However, when information is incomplete, it matters how the information about the different states of~$S$ is mixed, and two variables $X_{1},X_{2}$ that have the same joint distribution with~$f(S)$ may perform differently. 
    It is somewhat surprising that it is the random variable that has less information about~$S$ and that is conditionally independent of~$S$ given~$f(S)$ which actually performs better.
\end{itemize}

Example~\ref{ex:main} is different from the pre-garbling Example~\ref{ex:pregarbling} discussed in
Section~\ref{sec:pregarbling}. In the latter, both channels had the same amount of information (mutual
information) about~$S$, but for the given decision problem the information provided by~$\kappa_{2}$ was more relevant than the information provide by~$\kappa_1$.
The first difference in Example~\ref{ex:main} is that $X_1$ has less mutual information about~$S$ than~$X_2$ (Lemma~\ref{lemma:capable}). Moreover, both channels are identical with respect to~$f(S)$, i.e. they provide
the same information about~$f(S)$, and for~$X_1$ it is the only information it has about~$S$. So, one could argue that~$X_2$ has \emph{additional} information, that does not help though, but \emph{decreases} the expected utility instead. 

\bigskip

We give another example which shows that $X_{2}$ can also be chosen a deterministic function of~$S$.

\begin{example}
  \label{ex:main-detX2}
  Consider the joint distribution
  \begin{center}
    \begin{tabular}{cccc|c}
      $f(s)$ & $s$ & $x_{1}$&$x_{2}$& $P_{f(S)SX_1X_2}$ \\
      \hline
      0 & 0 & 0 & 0 & 1/6 \\
      0 & 0 & 1 & 0 & 1/6 \\
      0 & 1 & 0 & 1 & 1/6 \\
      0 & 1 & 1 & 1 & 1/6 \\
      1 & 2 & 1 & 1 & 1/3 \\
    \end{tabular}
  \end{center}
  The function $f$ is as above, but now also $X_{2}$ is a function of~$S$.  Again, the two channels $X_{1}\ot f(S)$ and
  $X_{2}\ot f(S)$ are identical, and $X_{1}$ is independent of $S$ given~$f(S)$.
  Consider the utility function
  \begin{center}
    \begin{tabular}{cc|c}
      $s$&$a$&$u(s,a)$ \\
      \hline
      0 & 0 & 0 \\
      0 & 1 & 0 \\
      1 & 0 & 0 \\
      1 & 1 & 1 \\
      2 & 0 & 0 \\
      2 & 1 & $-1$ \\
    \end{tabular}
  \end{center}
  One can show that it is optimal for agent who relies on $X_{2}$ to always choose action~0, which brings no reward (and
  no loss).  However, when the agent knows that $X_{1}$ is zero, he may safely choose action~1 and has a positive
  probability of receiving a positive reward.
\end{example}

To add another interpretation to the last example, we visualize the situation in the following Bayesian network:
\begin{equation*}
  X \leftarrow S \rightarrow f(S) \rightarrow X',
\end{equation*}
where, as in Proposition~\ref{prop:channels} and its proof, we let $X = X_{2}$, and we consider $X' = X_{1}$ as an
approximation of~$X$.  Then $S$ denotes the state of the system that we are interested in, and $X$ denotes a given set
of observables of interest.  $f(S)$ can be considered as a ``proxy'' in situations where it is difficult to observe~$X$
directly.  For example, in neuroimaging, instead of directly measuring the neural activity~$X$, one might look at an MRI
signal~$f(S)$.  In economic and social sciences, monetary measures like the GDP are used as a proxy for prosperity.

A decision problem can always be considered as a classification problem defined by the utility $u(s,a)$ by
considering the optimal action as the class label of state $S$.  Proposition~\ref{prop:channels} now says that there
exist $S,X,f(S)$ and a classifcation problem $u(s,a)$, such that the approximated features $X'$ (simulated from~$f(S)$)
allow for a better classification (higher utility) than the original features~$X$.

In such a situation, looking at $f(S)$ will always be better than looking at either $X$ or~$X'$.  Thus, the paradox will
only play a role in situations where it is not possible to base the decision on $f(S)$ directly.  For example, $f(S)$
might still be too large, or $X$ might have a more natural interpretation, making it easier to interpret for the
decision taker. But, when it is better to base a decision on a proxy rather than directly on the observable of interest, this interpretation may be erroneous.

\section{Information decomposition and Le Cam deficiency}
\label{sec:deficiency}

Given two channels~$\kappa_{1},\kappa_{2}$, how can one decide whether or not $\kappa_{1}\preceq\kappa_{2}$?  The easiest way is to check whether the equation $\kappa_{1} = \lambda\cdot\kappa_{2}$ has a solution~$\lambda$ that is a stochastic matrix.  In the finite alphabet case, this amounts to checking feasibility of a linear program, which is considered computationally easy.
However, when the feasibility check returns a negative result, this approach does not give any more information, e.g. how far $\kappa_{1}$ is away from being a garbling of~$\kappa_{2}$.
A function that quantifies how far $\kappa_{1}$ is away from being a garbling of~$\kappa_{2}$ is given by the \emph{(Le Cam) deficiency} and its various generalizations \cite{Raginsky2011}.
Another such function is given by $UI$ defined in~\cite{BROJA13:Quantifying_unique_information} that takes into account that the channels we consider are of the form $\kappa_{1}=(X_{1}\ot S)$ and $\kappa_{2}=(X_{2}\ot S)$, that is, they are derived from conditional distributions of random variables. In contrast to the deficiencies, $UI$ depends on the input distribution to these channels.

Let $P_{SX_1X_2}$ be a joint distribution of $S$ and the outputs $X_{1}$ and~$X_{2}$.
Let $\Delta_{P}$ be the set of all joint distributions of the random variables $S,X_{1},X_{2}$ (with the same alphabets) that are compatible with the marginal distributions of $P_{SX_1X_2}$ for the pairs $(S,X_{1})$ and $(S,X_{2})$, i.e.,
\begin{equation*}
  \Delta_{P} := \big\{Q_{SX_1X_2} \in\Delta: Q_{SX_1}=P_{SX_1},  Q_{SX_2}=P_{SX_2} \big\}.
\end{equation*}
In other words, $\Delta_{P}$ consists of all joint distributions that are compatible with~$\kappa_{1}$ and~$\kappa_{2}$ and that have the same distribution for~$S$ as~$P_{SX_1X_2}$. Consider the function
\begin{align*}
UI(S;X_1\backslash X_2) := \min_{Q\in\Delta_{P}} I_{Q}(S;X_1|X_2),
\end{align*}
where $I_{Q}$ denotes the conditional mutual information evaluated with respect to the the joint distribution~$Q$. This function has the following property: $UI(S;X_{1}\backslash X_{2}) = 0$ if and only if $\kappa_{1}\preceq\kappa_{2}$~\cite{BROJA13:Quantifying_unique_information}.
Computing $UI$ is a convex optimization problem. However, the condition number can be very bad, which makes the problem difficult in practice.

$UI$ is interpreted in~\cite{BROJA13:Quantifying_unique_information} as a measure of the \emph{unique} information that $X_1$ conveys about $S$ (with respect to $X_2$).
So, for instance, with this interpretation Example \ref{ex:main} can be summarized as follows: Neither $X_1$ nor $X_2$ has unique information about~$f(S)$.  However, both variables have unique information about~$S$, although $X_1$ is conditionally independent of $S$ given $f(S)$ and thus, in contrast to $X_2$, contains no ``additional'' information about~$S$.
We now apply $UI$ to a parameterized version of the \textsc{And} gate in Example \ref{ex:main}.

\begin{example}
  \label{ex:main_hm}
  Figure \ref{fig:fig_UI_a} shows a heat map of $UI$ computed on the set of all distributions
  of the form
\begin{center}
  \begin{tabular}{cccc|c}
    $f(s)$ & $s$&$x_{1}$&$x_{2}$& $P_{f(S)SX_1X_2}$ \\
    \hline
    0 & 0 & 0 & 0 & $\nicefrac{1}{8}+2b$ \\
    0 & 1 & 0 & 0 & $\nicefrac{1}{8}-2b$ \\
    0 & 0 & 0 & 1 & $\nicefrac{1}{8}+a$ \\
    0 & 1 & 0 & 1 & $\nicefrac{1}{8}-a$ \\
    0 & 0 & 1 & 0 & $\nicefrac{1}{8}+\nicefrac{a}{2}+b$ \\
    0 & 1 & 1 & 0 & $\nicefrac{1}{8}-\nicefrac{a}{2}-b$ \\
    1 & 2 & 1 & 1 & $\nicefrac{1}{4}$ 
  \end{tabular}
\end{center}
where $-\nicefrac{1}{8} \leq a \leq \nicefrac{1}{8}$ and $-\nicefrac{1}{16} \leq b \leq \nicefrac{1}{16}$.
This is the set of distributions of $S,X_{1},X_{2}$ that satisfy the following constraints:
\begin{enumerate}
\item $X_{1},X_{2}$ are independent;
\item $f(S) = \textsc{And}(X_1,X_2)$, where $f$ is as in Example~\ref{ex:main}; and
\item $X_{1}$ is independent of~$S$ given~$f(S)$.
\end{enumerate}
Along the secondary diagonal $b=\nicefrac{a}{2}$, the marginal distributions of the pairs $(S,X_1)$ and $(S,X_2)$ are identical.  In such a situation, the channels $(X_{1}\ot S)$ and $(X_{2}\ot S)$ are Blackwell-equivalent, and so $UI$ vanishes.  Furtheraway from the diagonal, the marginal distributions differ, and $UI$ grows.  The maximum value is achieved at the corners for $(a,b)=(-\nicefrac{1}{8},\nicefrac{1}{16})$, $(\nicefrac{1}{8},-\nicefrac{1}{16})$.  At the upper left corner $(a,b)=\pm(-\nicefrac{1}{8},\nicefrac{1}{16})$, we recover Example~\ref{ex:main}.

\begin{figure}
  	\centering
  	\begin{subfigure}{.5\textwidth}
  	\centering
  	\begin{tikzpicture}
  	\begin{axis}[
  	axis on top,
  	title={$UI$},
  	xlabel={$a$}, ylabel={$b$}, 
  	ylabel style={rotate=-90}, 
  	width=0.75\linewidth, height=0.75\linewidth,
  	xmin=-0.1213, xmax=0.1213, 
  	ymin=-0.0620, ymax=0.0620, 
  	xticklabel style={
  		/pgf/number format/fixed,
  		/pgf/number format/precision=5},
  	yticklabel style={
  		/pgf/number format/fixed,
  		/pgf/number format/precision=5},
  	colormap/autumn,
  	colorbar,
  	colorbar style={at={(1.1,0.001)},anchor=south}, 
   	colorbar style={
  		point meta min=0, 
  		point meta max=0.17, 
  		yticklabel={
  			\num[
  			scientific-notation = fixed,
  			round-precision = 3,
  			]{\tick}
  		}}
  	] 
  	\addplot graphics [xmin=-0.1213,xmax=0.1213,ymin=-0.0620,ymax=0.0620] {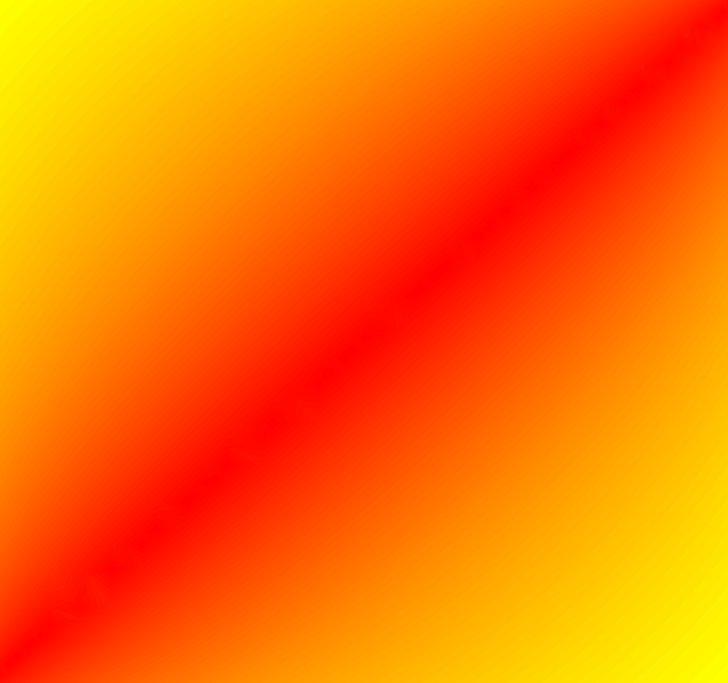};
  	\end{axis}
   	\end{tikzpicture}
  	\caption{}
  	\label{fig:fig_UI_a}
    \end{subfigure}%
   \begin{subfigure}{.5\textwidth}
   	\centering
   	\begin{tikzpicture}
   	\begin{axis}[
   	axis on top,
   	title={$UI$},
   	xlabel={$a$}, ylabel={$b$}, 
   	ylabel style={rotate=-90}, 
   	width=0.75\linewidth, height=0.75\linewidth,
   	xmin=0, xmax=1, 
   	ymin=0, ymax=1, 
   	xticklabel style={
   		/pgf/number format/fixed,
   		/pgf/number format/precision=5},
   	yticklabel style={
   		/pgf/number format/fixed,
   		/pgf/number format/precision=5},
   	colormap/autumn,
   	colorbar,
   	colorbar style={at={(1.1,0.001)},anchor=south}, 
   	colorbar style={
   		point meta min=0, 
   		point meta max=0.19, 
   		yticklabel={
   			\num[
   			scientific-notation = fixed,
   			round-precision = 3,
   			]{\tick}
   		}}
   	] 
   	\addplot graphics [xmin=0,xmax=1,ymin=0,ymax=1] {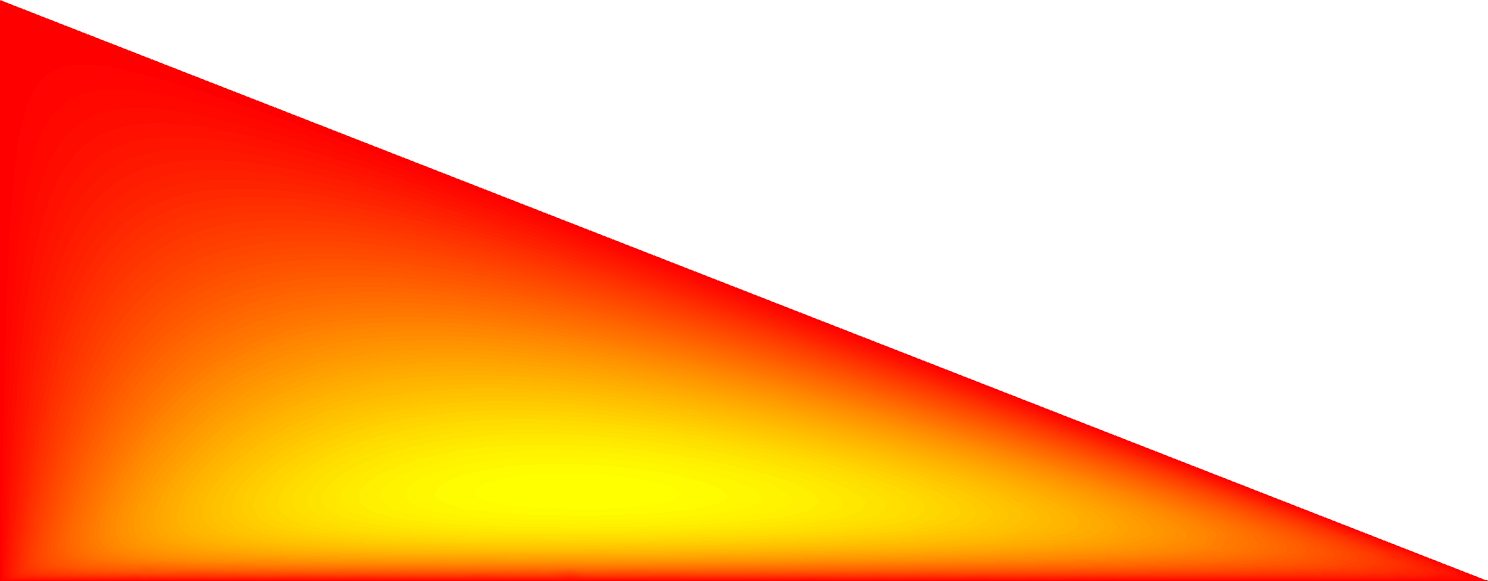};
   	\end{axis}
   	\end{tikzpicture}
   	\caption{}
   	\label{fig:fig_UI_b}
   \end{subfigure}
   \caption{Heatmaps for the function $UI$ in (a) Example \ref{ex:main_hm}, and (b) Example \ref{ex:main_hm_detX2}.}
   \label{fig:fig_UI}
\end{figure}
\end{example}

\begin{example}
  \label{ex:main_hm_detX2}
  Figure \ref{fig:fig_UI_b} shows a heat map of $UI$ computed on the set of all distributions of the form
  \begin{center}
    \begin{tabular}{cccc|c}
      $f(s)$ & $s$ & $x_{1}$&$x_{2}$& $P_{f(S)SX_1X_2}$ \\
      \hline
      0 & 0 & 0 & 0 & $\nicefrac{a^2}{(a+b)}$ \\
      0 & 0 & 1 & 0 & $\nicefrac{ab}{(a+b)}$ \\
      0 & 1 & 0 & 1 & $\nicefrac{ab}{(a+b)}$ \\
      0 & 1 & 1 & 1 & $\nicefrac{b^{2}}{(a+b)}$ \\
      1 & 2 & 1 & 1 & $1 - a - b$ \\
    \end{tabular}
  \end{center}
  where $a,b\ge 0$ and $a+b\le 1$.  This extends Example~\ref{ex:main-detX2}, which is recovered for $a=b=\nicefrac13$.
  This is the set of distributions of $S,X_{1},X_{2}$ that satisfy the following constraints:
  \begin{enumerate}
  \item $X_{2}$ is a function of~$S$, where the function is as in Example~\ref{ex:main-detX2}.
  \item $X_{1}$ is independent of~$S$ given~$f(S)$.
  \item The channels $X_{1}\ot f(S)$ and $X_{2}\ot f(S)$ are identical.
  \end{enumerate}
\end{example}






\bibliographystyle{IEEEtran}
\bibliography{BadBlackwell}



\end{document}